\title{\LARGE \bf
Adaptive Rejection of Periodic Disturbances Acting on Linear Systems with Unknown Dynamics
}
\author{Behrooz Shahsavari, Jinwen Pan and Roberto Horowitz
\thanks{B. Shahsavari and R. Horowitz are with the Department of Mechanical Engineering, University of California, Berkeley, CA 94720, U.S.A.
        {\tt\small \{behrooz,horowitz\}@berkeley.edu}}%
\thanks{J. Pan is with the Department of Automation, University of Science and Technology of China, Hefei, 230027, P.R.China,
        {\tt\small jinwen@berkeley.edu}}%
}
\theoremstyle{break}
\newtheorem{theorem}{Theorem}
\newtheorem{lemma}{Lemma}
\newtheorem{remark}{Remark}
\newcommand{\qi}{{q^{-1}}}
\newcommand{\tha}{ {\hat{\theta}} }
\newcommand{\bm}{\left[\begin{array} }
\newcommand{\mb}{\end{array}\right] }
\newcommand{\bk}{\left\{\begin{array} }
\newcommand{\kb}{\end{array}\right\} }
\newcommand{\figpath}{.}
\begin{document}

\maketitle
\thispagestyle{plain}
\pagestyle{plain}
%
%

\begin{abstract}  
This paper proposes a novel direct adaptive control method for rejecting unknown deterministic disturbances and tracking unknown trajectories in systems with uncertain dynamics when the disturbances or trajectories are the summation of multiple sinusoids with known frequencies, such as periodic profiles or disturbances. The proposed algorithm does not require a model of the plant dynamics and does not use batches of measurements in the adaptation process. Moreover, it is applicable to both minimum and non--minimum phase plants. The algorithm is a ``direct'' adaptive method, in the sense that the identification of system parameters and the control design are performed simultaneously. In order to verify the effectiveness of the proposed method, an \emph{add--on} controller is designed and implemented in the servo system of a hard disk drive to track unknown nano--scale periodic trajectories.
\end{abstract}

\section{Introduction}
Control methodologies for rejecting periodic and multi--harmonic disturbances or tracking such trajectories 
have attracted many researchers in the past two decades. 
There is a multitude of applications, especially due to the dominating role of rotary actuators and power generators, that crucially rely on this type of control. 
A non--exhaustive list of these applications include aircraft interior noise control \cite{emborg1998cabin,wilby1996aircraft}, 
periodic load compensation in wind turbines \cite{stol2003periodic,houtzager2013rejection}, 
wafer stage platform control \cite{de2000synthesis,dijkstra2004iterative}, 
steel casting processes \cite{tsao1996rejection}, 
laser systems \cite{mcever2004adaptive}, milling machines \cite{rober1996control,tsao1991control} and hard disk drives \cite{shahsavari2015adaptive,shahsavari2014adaptive}.
In this paper, we introduce a novel direct adaptive control for rejecting deterministic disturbances and tracking unknown trajectories in systems with unknown dynamics when the disturbances or trajectories are the summation of multiple sinusoids with known frequencies. Note that a periodic disturbance/trajectory with a known period can be considered as a special case of the problems under our study.

Control methods applied to this class of problems are typically categorized into two types, namely feedback methods that are based on internal model principle (IMP)  \cite{francis1976internal} and feedforward algorithms that usually use an external model \cite{tomizuka1990disturbance} or a \emph{reference} signal correlated with the disturbance \cite{bodson1997adaptive}.
The classical form of internal model for periodic disturbances introduces poles on the stability boundary which can cause poor numerical properties and instability when implemented on an embedded system with finite precision arithmetic. Instability can also happen due to unmodeled dynamics when the poles are on or very close to the stability boundary \cite{bodson2005rejection}.
Another limitation of IMP based repetitive control is that the controller sampling frequency has to be divisible by the fundamental frequency of the disturbance.

In general, adaptive feedforward control for this class of problems does not have the above limitations. This type of controllers commonly estimate a set of parameters that determine the control law. The estimated parameters converge  when the system is not stochastic or the adaptation gain is vanishing in a stationary (or cyclostationary) stochastic system. This implies that the estimated parameters can be frozen after convergence and the control sequence becomes a pure feedforward action that can be stored and then looked up without a need to feeding the error to the controller \cite{shahsavari2015adaptive}. This is an important advantage over the feedback schemes because that type of controller should be constantly in the loop to generate the control sequence.
Another advantage is that the Bode's sensitivity integral theorem does not hold true, which implies that perfect rejection can be achieved without affecting the suppression level at other frequencies.
Nevertheless, analysis of the adaptive methods is, in general, more complex and relies on a set of assumptions that may not hold true in many situations.

Although rejection of sinusoidal disturbances is a classical control problem, few algorithms exist for the case that the system dynamics is unknown and possibly time--varying. Gradient descent based algorithms that use online identification schemes to obtain a finite impulse response (FIR) of the plant \cite{zhang2001cross}  have been proposed 
by both control and signal processing communities.
The number of estimated parameters in these methods is usually large since low--order FIR models cannot mimic complex dynamics. 
The harmonic steady--state (HSS) control is another adaptive method for rejection of sinusoidal disturbances
which is easy to understand and implement.
However, it suffers from slow convergence since it relies on averaging over batches of data \cite{patt2005higher,chandrasekar2006adaptive}.

This paper provides a novel direct adaptive feedforward control that does not require a model of the plant dynamics or batches of measurements. The proposed method does not rely on any assumptions about the location of plant zeros and can be applied to minimum and non--minimum phase plants. 
The algorithm is a ``direct'' adaptive method, meaning that the identification of system parameters and the control design are not separate. 
It will be shown that the number of adapted parameters in this scheme is significantly less than methods that identify the plant frequency response  when the number of frequencies is large \cite{chandrasekar2006adaptive,pigg2006adaptive,pigg2010adaptive}.

The remainder of this paper is organized as follows.
We first formalize the problem and explain the system under our study in section \ref{sec:arch}.
Mathematical preliminaries and notations are given in section~\ref{sec:mathandprelim}.
The algorithm derivation is presented in section \ref{sec:adaptivecontrolsynthesis} and simulation results for 
rejecting periodic disturbances in a hard disk drive nanopositioner servo system are illustrated in section \ref{sec:dir_results}.
Conclusion remarks and future work form section \ref{sec:conclusion}.

\section{Problem Statement}\label{sec:arch}

\begin{figure}[t]
	\centering
	\centering
\includegraphics[width=0.7\linewidth]{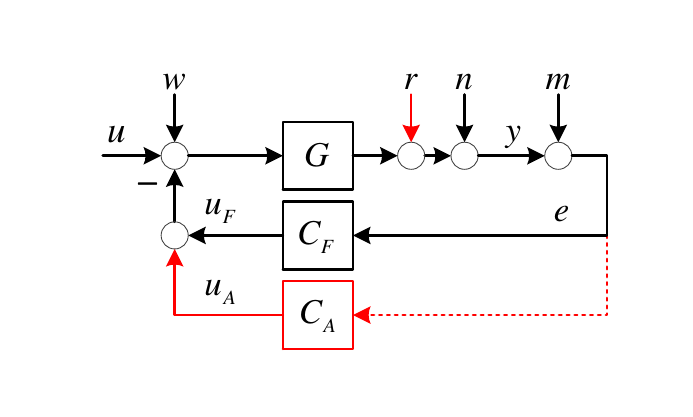}
	%
	\caption{Closed loop system augmented by a \emph{plug--in} adaptive controller.}
	\label{fig:SimpleFeedBackAndAdaptive_Abstract}
\end{figure}
The adaptive controller proposed in this work is aimed to be implemented in a \emph{plug--in} fashion,
meaning that it is used to augment an existing robustly stable closed--loop system in order to reject periodic disturbances (track periodic trajectories) that are not well rejected (tracked) by the existing controller. 
In this architecture, the original controller can be designed  without consideration of this special control task. Moreover, the plug--in controller does not alter the performance of the original control system.
To clarify this notion, we use a common Single-Input Single-Output (SISO) plant--controller interconnection shown in Fig.~\ref{fig:SimpleFeedBackAndAdaptive_Abstract} as a running example. 
The blocks $G$ and $C_F$ in the figure respectively denote a linear time invariant (LTI) plant and an LTI feedback compensator that form a stable closed--loop sensitivity function
	\begin{align*}
S := \frac{1}{1+G C_F}.
	\end{align*} 
One of the main contributions of the controller that will be presented shortly is that it does not require the plant and controller dynamics.
Since our design does not depend on whether the plant/nominal controller are continuous or discrete time, we assume that both $G$ and $C_F$ are discrete time systems to make notations simpler. 
It is worth noting that this interconnection is only a running example through this paper and the proposed controller can be \textit{plugged} to any unknown and stable LTI system regardless of its internal stabilization mechanism.  

A general stochastic environment is considered for the system by appending input disturbance $w$, output disturbance $n$, and contaminating measurement noise $m$ to our framework. Generally, the nominal feedback controller $C_F$ is designed to compensate for these input and output noises. 
The special deterministic disturbance that should be compensated by the adaptive controller is denoted by $r$, and without loss of generality, we assume that it contaminates the plant output. Let the adaptive controller sampling time be $T$. The class of disturbances under our study can be written as
\begin{align}\label{eq:r_definition}
r(k)=\sum\limits_{i=1}^{n} \alpha_i \sin\left(\omega_i k T + \delta_i \right)
\end{align}
where the amplitudes, $\alpha_i$, and phase shifts, $\delta_i$, are unknown but the frequencies, $\omega_i$, are known.

Our objective is to synthesize an adaptive controller that only uses the scalar--valued error signal $e(k)$ to
generate a feedforward control $u_{A}(k)$ such that it perfectly compensates for the effect of $r(k)$ on the error signal $e(k)$.
We call it a feedforward controller because when the system dynamics and disturbance profile are time--invariant, $C_A$ will not depend on the error signal once the control law is learned. 
\section{Mathematical Preliminaries and Notations}\label{sec:mathandprelim}
Let $R(z)$ be the single--input single--output system (in $z$--domain) from the adaptive control injection points to the error signal $e$
	\begin{align*}
	R\left(z\right) :&= G\left(z\right) S\left(z\right)
	\end{align*}
and let $A(\qi)$ and $B(\qi)$ be the polynomials that represent $R(\cdot)$ in time--domain
	\begin{align}\label{eq:R_definedby_AB}
	\begin{split}
		R(\qi) &= \frac{B(\qi)}{A(\qi)} 
.
	\end{split}
	\end{align}
We define an output disturbance, say $\bar{w}$, on $R(\cdot)$ and a polynomial $C(\qi)$ that satisfies
	\begin{align*}
		 \frac{1}{A(\qi)} \left[\bar{w}(k)\right]&= R\left[w(k)\right]+S\left[n(k) + m(k)\right] \\
	\end{align*}
We also define $\bar{r}(k)$ as a sequence on the output of $R(.)$ that has the same effect as $r$ in the error signal $e$
\begin{align*}
\bar{r}(k) = S\left[r(k)\right]\cdot
\end{align*} 
When the same transfer function filters multiple input signals $i^1(k), i^2(k), \cdots, i^m(k)$, we abuse the notation and use 
\begin{align*}
\begin{bmatrix}  T[i^1(k)]\\ T[i^2(k)]\\ \vdots \\T[i^m(k)]
\end{bmatrix}
=
T\left[\begin{bmatrix}i^1(k)\\ {i^2(k)}\\ \vdots\\ {i^m(k)}\end{bmatrix}\right].
\end{align*}
%
The disturbance $r(k)$ in \eqref{eq:r_definition} can be factorized as the inner product of a known ``regressor'' vector $\phi(k)$ and an unknown
vector of parameters $\theta$ 
	\begin{align}\label{eq:r_thetaphi}
	 r(k)= \theta^T \phi_R(k)
	\end{align}
where 
	\begin{align*}
	\theta^T:&=\left[\alpha_1 \cos(\delta_2), \alpha_1 \sin(\delta_2),\ldots \right.\\
	& \left. \qquad \qquad, \alpha_n \cos(\delta_n),\alpha_n \sin(\delta_{n})\right]\\
	\phi_R^T(k):&=\left[\sin(\omega_1kT), \cos(\omega_1kT),\ldots \right.\\
	\notag
	&\left. \qquad \qquad ,\sin(\omega_nkT), \cos(\omega_nkT)  \right].
	\end{align*}
\begin{lemma}\label{lem:periodic-response}
	Consider $r(k)$ as a general periodic signal and $L(\qi)$ as a discrete-time linear system.  
	The steady state response ${\tilde{r}}(k):=L(\qi)[{r}(k)]$ is periodic. 
	Moreover, when $r(k)$ is a linear combination of sinusoidal signals factorized similar to \eqref{eq:r_thetaphi}, ${\tilde{r}}(k)$ (in steady state) consists of sinusoidals with the same frequencies 
		\begin{align*}
			{\tilde{r}}(k)&=L(\qi)\left[\theta^T \phi_{{R}}(k)\right] 
			= \theta^T L(\qi)\left[ \phi_{{R}}(k)\right]
			= \theta^T \phi_{R_L}(k)
		\end{align*}
	where
		\begin{gather}
			\notag
			\phi_{R_L}^T(k):=
			\left[{m_{L_1}}\sin(\omega_1kT+\delta_{L_1}), {m_{L_1}}\cos(\omega_1kT+\delta_{L_1}),  \right. \\
			\qquad  \left. \ldots, {m_{L_n}}\sin(\omega_nkT+\delta_{L_n}), {m_{L_n}}\cos(\omega_nkT+\delta_{L_n})  \right].
		\end{gather} 
	Here, $m_{Li}$ and $\delta_{Li}$ are the magnitude and phase of $L(e^{-j \omega_iT})$ respectively. 
	Define 
		\begin{align}\label{eq:Dai}
			\begin{split}
			D_{Li}:=\begin{bmatrix}
				m_{Li} \cos(\delta_{Li} ) & m_{Li} \sin(\delta_{Li} )\\
				-m_{Li} \sin(\delta_{Li} ) & m_{Li} \cos(\delta_{Li} )
				\end{bmatrix}
			\end{split},
		\end{align}
	$\phi_{{R}}(k)$ can be transformed to $\phi_{R_L}(k)$ by a linear transformation
	\begin{align}\label{eq:Da}
	\phi_{R_L}(k)=\underbrace{\begin{bmatrix}
	D_{L1} & 0 & \cdots & 0\\
	0 & D_{L2} & \cdots & 0\\
	\vdots & \vdots &\ddots & \vdots\\
	0 & 0 & \cdots & D_{Ln}
	\end{bmatrix}}_{D_L} \phi_R(k).
	\end{align}
	As a result 
	\begin{align*}
	\tilde{r}(k)=L(\qi)\left[\theta^T \phi_{{R}}(k)\right] = \theta^T \phi_{R_L}(k) =\theta_{\bar{R}}^T D_L \phi_R.
	\end{align*}
\end{lemma}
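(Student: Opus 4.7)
The plan is to dispatch the three claims of the lemma in order: periodicity of the steady-state response, the explicit sinusoidal form, and the matrix representation via $D_L$.

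First I would prove periodicity by a convolution argument. If $h(\cdot)$ denotes the impulse response of the (assumed stable) LTI system $L(\qi)$, then in steady state $\tilde r(k)=\sum_{j\ge 0} h(j)\, r(k-j)$. For $r$ periodic with period $P$, a shift of index $k\to k+P$ inside the sum leaves the series invariant because $r(k+P-j)=r(k-j)$, so $\tilde r(k+P)=\tilde r(k)$. This handles the general periodic claim independently of the sinusoidal parametrization.

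Second, I would exploit linearity and the known sinusoidal steady-state response of LTI systems. Linearity lets me pull the constant vector $\theta^T$ out of $L(\qi)$ and reduces the problem to computing $L(\qi)[\sin(\omega_i kT)]$ and $L(\qi)[\cos(\omega_i kT)]$ for each $i$. The classical frequency-response fact gives, in steady state,
\begin{align*}
L(\qi)[\sin(\omega_i kT)] &= m_{Li}\sin(\omega_i kT + \delta_{Li}),\\
L(\qi)[\cos(\omega_i kT)] &= m_{Li}\cos(\omega_i kT + \delta_{Li}),
\end{align*}
with $m_{Li}=|L(e^{-j\omega_i T})|$ and $\delta_{Li}=\angle L(e^{-j\omega_i T})$. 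Stacking these componentwise yields exactly the stated $\phi_{R_L}(k)$, and therefore $\tilde r(k)=\theta^T\phi_{R_L}(k)$.

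Third, the matrix identity $\phi_{R_L}(k)=D_L\phi_R(k)$ is pure trigonometry: applying the angle-sum identities
\begin{align*}
m_{Li}\sin(\omega_i kT+\delta_{Li}) &= m_{Li}\cos(\delta_{Li})\sin(\omega_i kT)+m_{Li}\sin(\delta_{Li})\cos(\omega_i kT),\\
m_{Li}\cos(\omega_i kT+\delta_{Li}) &= -m_{Li}\sin(\delta_{Li})\sin(\omega_i kT)+m_{Li}\cos(\delta_{Li})\cos(\omega_i kT),
\end{align*}
shows that the $i$-th pair of entries of $\phi_{R_L}(k)$ is obtained from the $i$-th pair of entries of $\phi_R(k)$ by premultiplying with the $2\times 2$ block $D_{Li}$ defined in \eqref{eq:Dai}. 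Because different frequency components are decoupled in this operation, the overall transformation is block-diagonal with blocks $D_{L1},\dots,D_{Ln}$, which is precisely $D_L$ in \eqref{eq:Da}. Substituting then yields $\tilde r(k)=\theta^T D_L\phi_R(k)$, which is the final identity.

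The only place that requires some care is the second step: justifying that we may replace $L(\qi)[\sin(\omega_i kT)]$ by its steady-state expression. This is the main obstacle in the sense of rigor rather than difficulty; it needs the implicit stability assumption on $L(\qi)$ (so that transients decay) together with the definition of $\tilde r$ as a steady-state response stated in the lemma. The trigonometric identification of $D_{Li}$ is routine once the frequency-response step is in place.
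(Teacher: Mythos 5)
Your proof is correct. It is worth noting, though, that the paper does not actually prove this lemma: its entire ``proof'' is a pointer to a textbook (Kamen and Heck) for the steady-state sinusoidal response formula of an LTI system. Your write-up is therefore a self-contained expansion of what the paper leaves to the citation: the convolution argument $\tilde r(k)=\sum_{j\ge 0}h(j)r(k-j)$ with the index shift $k\mapsto k+P$ handles the general periodic claim, the classical frequency-response fact handles the sinusoidal case (this is precisely the content of the cited reference), and the angle-sum identities correctly reproduce the rows of $D_{Li}$ in \eqref{eq:Dai} — first row $(m_{Li}\cos\delta_{Li},\, m_{Li}\sin\delta_{Li})$ acting on $(\sin,\cos)$, second row $(-m_{Li}\sin\delta_{Li},\, m_{Li}\cos\delta_{Li})$ — so the block-diagonal form \eqref{eq:Da} follows since distinct frequencies never mix. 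You were also right to flag the one real gap in the lemma as stated: stability of $L(\qi)$ is nowhere assumed but is needed for a steady state to exist and for the transient of $L(\qi)[\sin(\omega_i kT)]$ to decay; in the paper's actual uses ($L=A(\qi)$, an FIR polynomial, and $L=S$, a stable sensitivity) this holds, so your explicit stability caveat makes the argument more honest than the original. Two cosmetic points: the paper's evaluation $L(e^{-j\omega_i T})$ should be read as substituting $\qi=e^{-j\omega_i T}$ (i.e.\ $q=e^{j\omega_i T}$), which is how you implicitly read it — under the other reading the phase would flip sign; and the final display of the lemma contains a typo ($\theta_{\bar R}^T$ should be $\theta^T$), which your version silently and correctly fixes.
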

	\begin{proof}
	Refer to \cite{kamen2000fundamentals} for a general formula for 
	the steady-state sinusoidal response of a linear time-invariant system.
	\end{proof}

Using Lemma \ref{lem:periodic-response} the equivalent disturbance $\bar{r}$ can be factorized as
	\begin{align}\label{eq:direct_rbar}
		\bar{r}(k)&=\theta_{\bar{R}}^T \phi_{{R}}(k)
	\end{align}
where $\theta_{\bar{R}}$ is unknown.

\section{Adaptive Control Synthesis}\label{sec:adaptivecontrolsynthesis}

\subsection{Error Dynamics}
The error sequence in time domain can be represented as a function of the closed--loop dynamics, control signals and disturbances
	\begin{align}\label{eq:mainE}
	{e}(k) &= \frac{B(\qi)}{A(\qi)}\left(u(k)+ u_{A}(k)\right) + \frac{1}{A(\qi)} \bar{w}(k) + \bar{r}(k)
	\end{align}
where $u(k)$ is an exogenous excitation signal and  $\bar{w}(k)$ is an unmeasurable wide--sense stationary sequence of independent random values with finite moments.
We assume that the nominal feedback controller is able to stabilize the open loop plant, i.e. $A(p)$ has all roots strictly outside the unit circle. 
Although a real dynamic system cannot be exactly described by finite order polynomials, in most of applications $A$ and, $B$  can be determined such that they give a finite vector difference equation  that describes the recorded data as well as possible, i.e.
\begin{align*}
A(\qi):&= 1+a_1 \qi + a_2 q^{-2} + \cdots + a_{n_A}  q^{-n_A} \\
B(\qi):&= 
		b_1 q^{-1} + b_2 q^{-2} + \cdots + b_{n_{B}}  q^{-n_{B}} 
.
\end{align*}
Without loss of generality, we assume that the relative degree of the transfer function from the controllable input channel to the output is 1, which implies that $n_A=n_{B}$. The analysis for other non-negative relative degrees is very similar to the sequel, but the notation would be more tedious due to differences in vector/matrix sizes.
Let $A^*(\qi):= 1 - A(\qi)$, then the error is given by 
\begin{align}
\begin{split}
\label{out1} {e}(k) &= A^*(\qi) {e}(k)+B(\qi) \left(u(k)+ u_{A}(k)\right)\\
& \qquad \qquad \qquad +  \bar{w}(k) 
+ A(\qi)\bar{r}(k)
\end{split}
\end{align}
This equation can be represented purely in discrete time domain in a vector form
\begin{align}
\begin{split}
\label{out2} {e}(k) &= \theta_A^T \phi_{e}(k)+ \theta_B^T \left(\phi_u(k)+\phi_{ u_{A}}(k)\right) 
 + {\tilde{r}}(k) + {\bar{w}}(k)
\end{split}
\end{align}
where 
\begin{align}\label{eq:direct_theta_phi}
	\begin{split}
	\theta_A^T :&= \begin{bmatrix} -a_1, &-a_2, &\cdots, &-a_{n_A}   \end{bmatrix},\\
	\theta_B^T :&= \begin{bmatrix} b_1, & b_2,  & \cdots, & b_{n_A}   \end{bmatrix},\\
	\phi_e^T(k) :&= \begin{bmatrix} {e}(k-1), & {e}(k-2), & \cdots, & {e}(k-n_A)   \end{bmatrix},\\
	\phi_u^T(k) :&= \begin{bmatrix} u(k-1), & u(k-2),  & \cdots, & u(k-n_A)   \end{bmatrix},\\
	\phi_{u_{A}}^T(k) :&= \begin{bmatrix}  u_{A}(k-1), &  u_{A}(k-2),  & \cdots, &  u_{A}(k-n_A)   \end{bmatrix},\\
	\end{split}
\end{align}
and ${\tilde{r}}(k):=A(\qi)\bar{r}(k)$. 
Note that two regressors, denoted by $\phi_u(k)$ and $\phi_{ u_{A}}(k)$, are considered for the excitation signal $u(k)$ and the adaptive control $u_A(k)$ separately although they could be combined into a unique regressor.  The rationale behind this consideration will be explained later after \eqref{eq:e}.
Since  disturbance $\bar{r}(k)$ is periodic and $A(\qi)$ is a stable filter -- i.e. it operates as an FIR filter -- the response ${\tilde{r}}(k)$ is also periodic by Lemma~\ref{lem:periodic-response}
	\begin{align*}
	{\tilde{r}}(k)&=A(\qi)\left[\theta_{\bar{R}}^T \phi_{{R}}(k)\right] 
	= \theta_{\bar{R}}^T \phi_{R_A}(k)
	\end{align*}
where
	\begin{align}
		\phi_{R_A}(k)=\underbrace{\begin{bmatrix}
		D_{A1} & 0 & \cdots & 0\\
		0 & D_{A2} & \cdots & 0\\
		\vdots & \vdots &\ddots & \vdots\\
		0 & 0 & \cdots & D_{An}
		\end{bmatrix}}_{D_A} \phi_R(k).
	\end{align}
Accordingly, ${\tilde{r}}(k)$ can be represented using the same regressor vector, $\phi_R(k)$
	\begin{align*}
		{\tilde{r}}(k)&= \theta_{\bar{R}}^T \phi_{R_A}(k) 
		= \underbrace{\theta_{\bar{R}}^T D_A}_{ \theta_{{R}}^T} \phi_{{R}}(k) 
		= \theta_{{R}}^T \phi_{{R}}(k).
	\end{align*}
Substituting this expression in \eqref{out2} yields
	\begin{align}\label{out3} 
	\begin{split}
		{e}(k) &= \theta_A^T \phi_{e}(k)+ \theta_B^T \left(\phi_u(k)+\phi_{ u_{A}}(k)\right)+ \theta_R^T \phi_{{R}}(k)+ {\bar{w}}(k).
	\end{split}
	\end{align}
Equation \eqref{out1} shows that an \emph\textit{ideal} control signal $u_A^*(k)$ should satisfy
	\begin{align}\label{eq:bdar}
		B(\qi) u_A^*(k) + A(\qi) \bar{r}(k)=0.
	\end{align}
Again, since $B(\qi)$ and $A(\qi)$ are both LTI systems and $\bar{r}(k)$ contains only sinusoidal signals, the ideal control signal ${u_A^*}(k)$ has to have sinusoidal contents at frequencies equal to $\omega_i$'s. This motivates us to decompose the ideal control signal into
	\begin{align*}
	{u_A^*}(k)=\theta_D^T\phi_R(k).
	\end{align*}
By this representation of the control signal, our goal will be to estimate $\theta_D$ in an adaptive manner. We define the \emph{actual} control signal as 
	\begin{align}\label{eq:control}
	 u_{A}(k)=\tha_D^T(k)\phi_R(k)
	\end{align}
where $\tha_D(k)$ is the vector of estimated parameters that should ideally converge to $\theta_D$. As a result, the residual in \eqref{eq:bdar} when $\theta_D$  is replaced by $\hat{\theta}_D$ is 
	\begin{align}\label{eq:swap1}
		\begin{split}
			&B(\qi)  u_{A}(k)+A(\qi)\bar{r}(k) = \\
			& \qquad \qquad \qquad B(\qi)\left[\tha_D^T(k)\phi_R(k)\right]+\theta_R^T \phi_R(k).
		\end{split}
	\end{align}
\begin{lemma}\label{lemma:swapping}
	Let $B(\qi)$ have a minimal realization $B(\qi)=C_B^T\left(qI-A_B\right)^{-1}B_B$. Then, 
	\begin{align*}
	B(\qi)\left[\tha_D^T(k)\phi_R(k)\right] = \tha_D^T(k) \left\{B(\qi)\left[\phi_R(k)\right]\right\}+w_t(k)
	\end{align*}
	where 
	\begin{align*}
	w_t(k):=& -M(\qi)\left[H(\qi)\left[\phi_R^T(k+1)\right]
	\left[\tha(k+1)-\tha(k)\right]\right]\\
	M(\qi):&=C_B^T(qI-A)^{-1}\\
	H(\qi):&=(qI-A)^{-1}B_B
	\end{align*}
\end{lemma}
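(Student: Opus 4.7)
The plan is a discrete-time swapping argument based on the minimal state-space realization $B(\qi)=C_B^T(qI-A_B)^{-1}B_B = C_B^T H(\qi)$. The idea is to express each side of the claimed identity as $C_B^T$ times an internal state that is driven by the same scalar $\tha_D^T(k)\phi_R(k)$ but assembled in a different order, and then to show that the two states differ by a term driven precisely by the parameter increment $\tha(k+1)-\tha(k)$.

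First I would realize the left-hand side by the state $x(k)$ with $x(k+1)=A_B x(k)+B_B\,\tha_D^T(k)\phi_R(k)$, so that $B(\qi)[\tha_D^T(k)\phi_R(k)] = C_B^T x(k)$. Separately, I would let $Y(k):=H(\qi)[\phi_R^T(k)]$ be the $n\times m$ matrix whose $i$-th column is $H(\qi)$ applied to the $i$-th component of $\phi_R(k)$; it satisfies $Y(k+1)=A_B Y(k)+B_B\phi_R^T(k)$, and $B(\qi)[\phi_R(k)]=Y^T(k)C_B$. Setting $z(k):=Y(k)\tha_D(k)$ then gives $\tha_D^T(k)\,B(\qi)[\phi_R(k)] = C_B^T z(k)$, so the assertion reduces to identifying $w_t(k)=C_B^T(x(k)-z(k))$.

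The core step is deriving a clean recursion for $\delta(k):=x(k)-z(k)$. Writing $\tha_D(k+1)=\tha_D(k)+\Delta\tha_D(k)$ with $\Delta\tha_D(k):=\tha_D(k+1)-\tha_D(k)$ and substituting the $Y$-update into $z(k+1)=Y(k+1)\tha_D(k+1)$, one regroups terms to get $z(k+1)=A_B z(k)+B_B\tha_D^T(k)\phi_R(k)+Y(k+1)\Delta\tha_D(k)$, which differs from the $x$-recursion only in the last term. Therefore $\delta(k+1)=A_B\delta(k)-Y(k+1)\Delta\tha_D(k)$, i.e.\ $\delta(k)=-(qI-A_B)^{-1}[Y(k+1)\Delta\tha_D(k)]$; left-multiplying by $C_B^T$ and folding in the definitions of $M$ and $H$ produces exactly the stated formula for $w_t$.

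The principal obstacle is bookkeeping rather than analysis. One must be careful that $H(\qi)[\phi_R^T(k+1)]$ is interpreted as the $n\times m$ matrix obtained by filtering $\phi_R$ componentwise, so that its product with the $m$-vector $\tha(k+1)-\tha(k)$ is a column in the state space of $H$ on which $M(\qi)$ can then act. A related subtlety is the appearance of the shifted argument $\phi_R^T(k+1)$ in $w_t$: this is not a modeling choice but emerges automatically, because the forcing term in the $\delta$-equation is the $Y$-state \emph{after} the $A_B$-step, which is $Y(k+1)$. Keeping this one-step shift consistent throughout is where a sloppy proof is most likely to err.
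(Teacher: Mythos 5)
Your derivation is correct, but it is worth noting that the paper does not actually prove this lemma: its ``proof'' is a one-line pointer to the discrete-time swapping lemma in Tao's adaptive control text. What you have written is therefore a self-contained proof of the cited result, and it follows the standard state-space route that underlies that lemma. Your bookkeeping checks out: with $x(k+1)=A_B x(k)+B_B\,\tha_D^T(k)\phi_R(k)$ and $Y(k+1)=A_B Y(k)+B_B\phi_R^T(k)$, the scalar identity $\phi_R^T(k)\tha_D(k)=\tha_D^T(k)\phi_R(k)$ gives $z(k+1)=A_B z(k)+B_B\,\tha_D^T(k)\phi_R(k)+Y(k+1)\bigl[\tha_D(k+1)-\tha_D(k)\bigr]$, so $\delta(k)=x(k)-z(k)$ obeys $\delta(k+1)=A_B\delta(k)-Y(k+1)\bigl[\tha_D(k+1)-\tha_D(k)\bigr]$ and $w_t(k)=C_B^T\delta(k)$ is exactly the stated residual, with the one-step shift $\phi_R^T(k+1)$ emerging for precisely the reason you identify. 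Your reading of the statement's $A$ as $A_B$ and $\tha$ as $\tha_D$ silently corrects two typos in the paper, which is the right call, and your matrix interpretation of $H(\qi)\bigl[\phi_R^T(\cdot)\bigr]$ matches the paper's componentwise-filtering convention. One small point you should make explicit: the identity as derived assumes zero (or matched) initial conditions for $x$, $Y$, and hence $\delta$; in general there is an extra transient $C_B^T A_B^k\delta(0)$, which is harmless here — indeed, since $B(\qi)=b_1 q^{-1}+\cdots+b_{n_B}q^{-n_B}$ is FIR, a minimal realization has $A_B$ nilpotent and the transient vanishes after finitely many steps — but a careful write-up should say so. Compared with the paper's citation, your argument buys transparency (it shows why the strictly proper $B(\qi)$ fits the convention $y(k)=C_B^T x(k)$ and where the shift comes from) at the cost of length; substantively the two are the same lemma.
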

	\begin{proof}
		Refer to the discrete--time swapping lemma in \cite{tao2003adaptive}.
	\end{proof}
We define a new parameter vector
\begin{align}\label{eq:thetaM}
\theta_M^T(k):=\tha_D^T(k) D_B +\theta_R^T
\end{align}
where $D_B$ is a  matrix similar to $D_A$ in \eqref{eq:Da}, but its block diagonal terms are formed by the magnitude and phase of $B\left(e^{-j\omega_iT}\right)$ rather than $A(e^{-j\omega_iT})$.
Since the vector $\theta_M(k)$ corresponds to the imperfection in control synthesis, it is called the \emph{residual parameters vector} throughout this section.
Substituting the result of Lemma~\ref{lemma:swapping} in \eqref{eq:swap1} yields
	\begin{align}\notag
		B(\qi)  &u_{A}(k)+ A(\qi)\bar{r}(k) \\
		\notag
		&=\tha_D^T(k)\left(B(\qi)[\phi_R(k)]\right)+\theta_R^T \phi_R(k)+w_t(k)\\ 
		\notag
		&=\tha_D^T(k) D_B \phi_R(k)+\theta_R^T \phi_R(k)+w_t(k)\\
		\label{eq:bu+ar}
		&=\theta_M^T(k) \phi_R(k) + w_t(k)
	\end{align}
where the term $\theta_M^T(k) \phi_R(k)$ corresponds to the residual error at the compensation frequencies. The term $w_t(k)$ represents the transient excitation caused by the variation of $\tha_D(k)$ over time.
As a result, the term 
$\theta_B^T \phi_{ u_{A}}(k) + \theta_R^T \phi_{{R}}(k)$
in \eqref{out3} can be replaced by \eqref{eq:bu+ar}  which yields to
\begin{align}\label{eq:e}
\begin{split}
&{e}(k) = \theta_A^T \phi_{e}(k)+ \theta_B^T \phi_u(k) + \theta_M^T(k) \phi_{{R}}(k)+ w_t(k)+{\bar{w}}(k).
\end{split}
\end{align}
\begin{remark}
The reason behind choosing two separate regressors for $u(k)$ and $u_A(k)$, as remarked earlier, is that the recent substitution in the above equation is not feasible if the two regressors were combined into a single regressor.
\end{remark}

\subsection{Parameter Adaptation Algorithm}
The error dynamics shows that the information obtained from measurements cannot be directly used to estimate $\hat{\theta}_D$ as long as the closed loop system is unknown. We propose an adaptive algorithm in this section that accomplishes the estimation of the closed loop system and control synthesis simultaneously. 

Let $\tha_A$, $\tha_B$ and $\tha_M$ be the estimated parameters analogous to \eqref{eq:e}. 
We denote the \emph{a--priori} estimate of the error signal at time $k$ based on the estimates at $k-1$ as 
	\begin{align}\label{eq:yapriori}
	\begin{split}
		\hat{y}(k)&=\tha_A^T(k-1) \phi_{e}(k)+ \tha_B^T(k-1) \phi_u(k) + \tha_M^T(k-1) \phi_{{R}}(k).
	\end{split}
	\end{align}
and accordingly, the \emph{a--priori} 
estimation error is defined as
	\begin{align}\label{eq:eapriori}
	\begin{split}
	\epsilon^{\circ}(k):&={e}(k) - \hat{y}(k)
	\end{split}
	\end{align}
%
Assume that the estimates at time $k=0$ are initialized by either zero or some ``good'' values when prior knowledge about the system dynamics  is available.  
We propose the following adaptation algorithm for updating the estimated parameters
	\begin{align}\label{eq:paa}
	\begin{split}
		&\bm{c} \tha_A(k) \\ \tha_B(k) \\ \hdashline \tha_M(k) \mb = 
		\bm{c} \tha_A(k-1) \\ \tha_B(k-1)  \\ \hdashline \tha_M(k-1) \mb \\
		&\quad +
		\bm{c:c} \gamma_1(k) F^{-1}(k) & 0 \\ \hdashline 0  & \gamma_2(k)f^{-1}(k)I \mb
		\bm{c} \phi_{e}(k) \\  \phi_u(k) \\  \hdashline \phi_R(k) \mb \epsilon^{\circ}(k).
	\end{split}
	\end{align}
$F(k)$ is a positive (semi) definite matrix with proper dimension and $f(k)$ is a positive scalar. 
These gains, which are usually known as learning factor or step size, can be updated via either recursive least squares algorithm, least mean squares type methods or a combination of them. 
We use recursive least squares for the plant  since the number of coefficients is usually ``small''. On the other hand, for large $n$, the recursive least squares algorithm requires major computations. Therefore, it is of interest to reduce the computations, possibly at the price of slower convergence, by replacing the recursive least squares update law by the stochastic gradient method.
It is well known that the step size of adaptive algorithms in stochastic environments should converge to zero or very small values to avoid ``excess error'' caused by parameter variations due to noises. Therefore, positive real valued decreasing scalar sequences $\gamma_1(k)$ and $\gamma_2(k)$ are considered conjointly with the step sizes. 
More explicitly, the update rules for $F$ and $f$ are
	\begin{gather*}
		F(k)=F(k-1)+\gamma_1(k)\left(\begin{bmatrix}\phi_{e}(k) \\ \phi_u(k)\\ \phi_{\epsilon}(k)  \end{bmatrix}\begin{bmatrix}\phi_{e}(k) \\ \phi_u(k)\\ \phi_{\epsilon}(k)  \end{bmatrix}^T-F(k-1)\right)\\
		f(k)=f(k-1)+\gamma_2(k)\left(\phi_R^T(k)\phi_R(k)-f(k-1)\right).
	\end{gather*}
\begin{remark}
$\phi_u(k)$ should be persistently exciting of order $2n_A$ in order to guarantee that $F(k)$ is non--singular and \eqref{eq:paa} is not susceptible to numerical problems. It is clear that  $f(k)$ is not subjected to this issue since $\phi_R^T(k)\phi_R(k)$ is always strictly positive.
\end{remark}



%
\subsection{Control Synthesis}
Suppose that the parameter vector $\theta_M(k)$ and response matrix $D_B$ are known at time step $k$. Then, a possible update rule that satisfies \eqref{eq:bdar} would be
	\begin{align}\label{eq:direct_ideal_solution}
	\begin{split}
		0 &= \tha_D^T(k+1)D_B+\theta_R^T \\
		&=\tha_D^T(k+1)D_B+\theta_M^T(k)-\tha_D^T(k)D_B\Rightarrow\\
		\tha_D^T(k+1) &= \tha_D^T(k) -  \theta_M^T(k) D_B^{-1}.
	\end{split}
	\end{align}
Here, we have used the fact that $D_B$ is a block diagonal combination of scaled rotation matrices, which implies that it is full rank and invertible. This is an infeasible update rule since neither $\theta_M(k)$, nor $D_B$ is known. We replace these variables by their respective estimated values and use a small step size $\alpha$ in order to avoid large transient and excess error
	\begin{align*}
		\tha_D^T(k+1) &=  \tha_D^T(k) - \alpha \tha_M^T(k) \hat{D}_B^{-1}(k).
	\end{align*}
Note that this update rule works as a first order system that has a pole at $1$.
In order to robustify this difference equation we alternatively propose using a \emph{Ridge} solution for \eqref{eq:direct_ideal_solution}. More formally, we are interested in minimizing the instantaneous cost function 
	\begin{align*}
		J_c(k) := \frac{1}{2}\|\tha_D^T(k)+\theta_R^T {D}_B^{-1}\|_2^2+\frac{1}{2}\lambda \|\tha_D^T(k)\|_2^2
	\end{align*}
where $\lambda$ is a (positive) weight for the penalization term.
We use a gradient descent algorithm to recursively update $\tha_D$. 
Let $\beta = 1-\alpha \lambda$ and 
the gradient of $J_c(k)$ with respect to $\tha_D^T(k)$ be denoted by
	\begin{align*}
		\frac{\partial J_c(k)}{\partial \tha_D^T(k)}=\left(\tha_D^T(k)+\theta_M^T(k){D}_B^{-1}-\tha_D^T(k)\right)+\lambda \tha_D^T(k).
	\end{align*}
Since the actual values of $\theta_M$ and $D_B$ are unknown, we use the estimates and define 
the gradient descent update rule for $\tha_D$ as
	\begin{align}		\label{eq:integral}
		\tha_D^T(k+1) 
		&=  \beta \tha_D^T(k) - \alpha \tha_M^T(k) \hat{D}_B^{-1}(k).
	\end{align}
This expression implies that a positive value of $\beta$  less than 1 results in a bounded value of $\hat{\theta}_D$ in steady state as long as $ \tha_M^T \hat{D}^{-1}_B$ stays bounded. Moreover, assuming that $\tha_M$ and $\hat{D}_B$ converge to the actual $\theta_M$ and $D_B$ -- which will be proved later -- the steady state residue is
	\begin{align*}
		\lim\limits_{k\rightarrow \infty}\theta_M(k) &=D_B^T\lim\limits_{k\rightarrow \infty }\hat{\theta}_D(k)+\theta_R\\ 
		&=\frac{-\alpha}{1-\beta } \lim\limits_{k\rightarrow \infty}\theta_M(k) +\theta_R\\
		&=\frac{1-\beta}{1-\beta + \alpha} \theta_R.
	\end{align*}
This expression shows that there is a compromise between the steady state attenuation level and robustness, and in order to achieve both, the two gains should be chosen such that
\begin{align}\label{eq:alpha_beta_relation}
0< \alpha \ll \beta < 1.
\end{align}
Now that we have an update law for $\tha_D(k)$, we have a complete algorithm for synthesizing the control signal (repeated from \eqref{eq:control})
	\begin{align*}
	u_{A}(k)=\tha_D^T(k)\phi_R(k).
	\end{align*}

\begin{theorem}\label{thm:main}
The control update rule outlined by \eqref{eq:control},  \eqref{eq:paa} and \eqref{eq:integral} make $\theta_M$ converge to $\frac{1-\beta}{1-\beta+\alpha} \theta_R$ with probability 1, 
	the only equilibrium point of the closed loop system is stable in the sense of Lyapunov
	 and it corresponds to $\tha_A = \theta_A$,
	$\tha_B = \theta_B$, and $\tha_M = \frac{1-\beta}{1-\beta+\alpha} \theta_R$
	\textbf{if} the following 
	conditions are satisfied:
	\begin{enumerate}
		\item \label{ass:thm:u}
			$u(k)$ is persistently exciting of at least order $2n_A$.
		\item \label{ass:thm:gamma}
			$\sum\limits_{k=1}^{\infty}\gamma_i(k)=\infty$ and $\gamma_i(k)\rightarrow 0$ as $k\rightarrow \infty$ for $i=1,2$.
		\item  \label{ass:thm:A}
			The estimated $\tha_A(k)$ belongs to 
			\begin{align*}
			\begin{split}
			\mathscr{D}_A:=&\left\{\vphantom{\frac14} \tha_A : 1+\hat{a}_1 q +\cdots + \hat{a}_{n_A} q^{{n_A}} =0 \Rightarrow |q|>1 \vphantom{\frac14}\right\}.
			\end{split}
			\end{align*}
			infinitely often with probability one.
		\item \label{ass:thm:B}
			The estimated $\tha_B(k)$ always belongs to
			\begin{align*}
				\mathscr{D}_B:&=
				\left\{\vphantom{\frac12}  \tha_B : 0< |\hat{b}_1 e^{-j\omega_h} + \cdots + \hat{b}_{n_A}  e^{-jn_A \omega_h}|\right. \\
				&\qquad~~ < \frac{\alpha}{1-\beta}|b_1 e^{-j\omega_h} + \cdots + b_{n_A}  e^{-jn_A \omega_h}|,\\
				&\qquad \left.  \text{  and  } \forall h \in \{1,\ldots,n\} \vphantom{\frac12}  \right\}.		\end{align*}
		\item   \label{ass:thm:H}
			$\text{Real}\left(\frac{{B}(e^{-j\omega_h})}{\hat{B}_k(e^{-j\omega_h})}\right)>0$  for all $h\in \{1, \ldots, n\}$ infinitely often with probability one.
		\newcounter{assumptionCounter}
		\setcounter{assumptionCounter}{\theenumi}
	\end{enumerate}
\end{theorem}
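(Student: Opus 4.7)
The plan is to combine Ljung's ODE method for the parameter adaptation algorithm (PAA) with a deterministic contraction analysis for the control--synthesis update, exploiting the two--timescale structure in which the ``fast'' identification loop tracks the instantaneous $\theta_M(k)$ while the ``slow'' loop driven by the small gain $\alpha$ forces $\hat{\theta}_D$ to its fixed point. First I would substitute \eqref{eq:e} into the definition \eqref{eq:eapriori} of the a--priori error to write
\begin{align*}
\epsilon^{\circ}(k) = \tilde{\theta}_A^T(k-1)\phi_e(k)+\tilde{\theta}_B^T(k-1)\phi_u(k)+\tilde{\theta}_M^T(k-1)\phi_R(k)+w_t(k)+\bar{w}(k),
\end{align*}
with $\tilde{\theta} := \theta - \hat{\theta}$, recasting \eqref{eq:paa} as a standard stochastic--approximation recursion driven by a linear regression plus the swapping perturbation $w_t(k)$ of Lemma \ref{lemma:swapping} and the stationary innovation $\bar{w}(k)$.

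Applying Ljung's ODE method then reduces convergence of the identification loop to analyzing an averaged ODE. Condition \ref{ass:thm:u} (persistent excitation of order $2n_A$) together with the sinusoidal structure of $\phi_R$ makes the associated information matrix uniformly positive--definite; condition \ref{ass:thm:gamma} supplies the Robbins--Monro hypotheses on $\gamma_i$; and conditions \ref{ass:thm:A} and \ref{ass:thm:B} allow the customary projection of the estimates onto $\mathscr{D}_A\cap\mathscr{D}_B$ without biasing the ODE and guarantee recurrence of the stability domain. Because $\hat{\theta}_D$ evolves with the small step $\alpha$ while $\gamma_i(k)\to 0$, the swapping term $w_t(k)$ is $o(1)$ in mean--square on the identification timescale and does not shift the ODE equilibrium, so $\hat{\theta}_A(k)\to\theta_A$, $\hat{\theta}_B(k)\to\theta_B$ (hence $\hat{D}_B(k)\to D_B$), and $\hat{\theta}_M(k)$ tracks $\theta_M(k)$ with probability one.

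Next I would characterize the unique equilibrium algebraically. Setting $\hat{\theta}_D(k+1)=\hat{\theta}_D(k)=:\hat{\theta}_D^\star$ in \eqref{eq:integral} with $\hat{D}_B=D_B$ gives $(1-\beta)\hat{\theta}_D^{\star T}D_B=-\alpha\theta_M^{\star T}$; combining with \eqref{eq:thetaM} in the form $\theta_M^T=\hat{\theta}_D^T D_B+\theta_R^T$ yields exactly $\theta_M^\star=\frac{1-\beta}{1-\beta+\alpha}\theta_R$. For Lyapunov stability of this fixed point I would linearize \eqref{eq:integral}: using $\theta_M(k)=\hat{\theta}_D^T(k)D_B+\theta_R$ and $\hat{D}_B(k)\approx D_B$, the linearized recursion decouples across the $n$ harmonics into $2\times 2$ blocks with one--step transition matrix $\beta I - \alpha D_B\hat{D}_B^{-1}$ evaluated at frequency $\omega_h$. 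Condition \ref{ass:thm:H}, the classical Landau--Goodwin--type strict--positive--real condition on $B(e^{-j\omega_h})/\hat{B}_k(e^{-j\omega_h})$, renders each block strictly contractive in a common quadratic metric, while condition \ref{ass:thm:B} bounds their spectral radius quantitatively so that the contraction is compatible with $\beta<1$.

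I expect the main obstacle to be the rigorous handling of the coupling between the two loops, because condition \ref{ass:thm:H} is only required to hold \emph{infinitely often} rather than uniformly in time; the cleanest resolution is to combine ODE averaging for the PAA with a stopping--time argument that confines the contraction analysis to the random subsequence on which the SPR condition holds, and to verify that the projections implicit in conditions \ref{ass:thm:A} and \ref{ass:thm:B} neither bias the PAA estimates nor stall the slow update \eqref{eq:integral}.
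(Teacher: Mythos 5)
Your high--level route is the same as the paper's: the paper's proof is itself only a sketch that adapts ``Assumptions C'' of Ljung's analysis of recursive stochastic algorithms to the cyclostationary input at hand and then invokes his Theorem~2 to conclude convergence with probability one to the stable equilibrium of the associated averaged ODE. Your fixed--point algebra, $(1-\beta)\hat{\theta}_D^{\star T}D_B=-\alpha\theta_M^{\star T}$ combined with \eqref{eq:thetaM} to get $\theta_M^{\star}=\frac{1-\beta}{1-\beta+\alpha}\theta_R$, is correct and reproduces the steady--state residue computation the paper carries out just before the theorem, and your $\beta I-\alpha D_B\hat{D}_B^{-1}$ eigenvalue analysis per harmonic is the right ingredient for where conditions~\ref{ass:thm:B} and~\ref{ass:thm:H} enter.

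There is, however, a genuine flaw in the execution: your two--timescale assignment is inverted. Condition~\ref{ass:thm:gamma} forces $\gamma_i(k)\rightarrow 0$, while $\alpha$ and $1-\beta$ in \eqref{eq:integral} are fixed constants; hence asymptotically the identification loop \eqref{eq:paa} is the \emph{slow} subsystem and the control--synthesis recursion is the \emph{fast} one, not the other way around. A vanishing--gain Robbins--Monro estimator cannot in general ``track the instantaneous $\theta_M(k)$'' when $\theta_M(k)$ is driven by a constant--gain recursion, and your claim that $w_t(k)$ is $o(1)$ ``on the identification timescale'' rests on this inverted picture: generically $\hat{\theta}_D(k+1)-\hat{\theta}_D(k)=O(\alpha)$, and it vanishes only near the fast subsystem's equilibrium. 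The consistent resolution --- and effectively what the paper's appeal to Ljung presupposes --- is to fold $\hat{\theta}_D(k)$ (hence $\theta_M(k)$ and $w_t(k)$) into the Markovian state whose frozen--parameter dynamics must be exponentially stable; your contraction computation then belongs to verifying that regularity condition rather than to a separate slow--loop analysis, with conditions~\ref{ass:thm:B} and~\ref{ass:thm:H} guaranteeing the frozen--parameter contraction (at the equilibrium $\hat{B}=B$ the block eigenvalues are $\beta-\alpha<1$). Finally, the ``infinitely often with probability one'' phrasing of conditions~\ref{ass:thm:A} and~\ref{ass:thm:H} is precisely Ljung's hypothesis that the estimates return infinitely often to a compact subset of the ODE equilibrium's domain of attraction, which his Theorem~2 handles directly; your stopping--time/random--subsequence patch for condition~\ref{ass:thm:H} is therefore unnecessary once the roles of the two loops are swapped, and as written it would not rescue the argument, since the contraction you confine to that subsequence concerns the loop you have mislabeled as slow.
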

\begin{proof}
Only the sketch of the proof is outlined here due to space limitation.	The method of analysis of stochastic recursive algorithms developed by Ljung \cite{ljung1977analysis} can be deployed to study the
	convergence and asymptotic behavior of the proposed adaptive algorithm with update and control rules given in \eqref{eq:paa}, \eqref{eq:integral} and
	\eqref{eq:control}. 
	Three sets of regularity conditions are proposed in \cite{ljung1977analysis} that
	target the analysis of deterministic and stationary stochastic processes.
	The problem under our study cannot be exactly outlined in these frameworks since the input signal consists of stochastic and deterministic parts,
	and as a matter of fact, it is a cyclostationary stochastic process.
	However, ``Assumptions C'' in \cite{ljung1977analysis} can be adopted and generalized to this case with minor modifications.
	It can be shown that these regularity conditions are satisfied when the assumptions of theorem \ref{thm:main} are satisfied. 
	Under these regularity conditions, the results of theorem 2 in \cite{ljung1977analysis}  imply that the only convergence point of the system is the 
	stable equilibrium of the differential equation counterpart.
	This equilibrium point corresponds to the actual values of plant and $\frac{1-\beta}{1-\beta + \alpha} \theta_R$.
	Moreover, this theorem proves that the estimated parameters
	converge with probability one to this equilibrium point which results in $\theta_M \rightarrow\frac{1-\beta}{1-\beta + \alpha} \theta_R$ with probability one.
%
\end{proof}

\begin{remark}
	Assumption \ref{ass:thm:gamma} can be satisfied by a broad range of gain sequences $\gamma(k)$. For instance,
	both regularity conditions hold for $\gamma(k)=\frac{C}{k^{\alpha}}$ when $0<C<\infty$ and $0<\alpha\le 1$.
\end{remark}
\begin{remark}
	Assumption \ref{ass:thm:A} requires monitoring the roots of $\hat{A}(\qi)$ polynomial. 
	This is a common issue in adaptive control and several methods have been proposed.
	For instance, the estimates can be projected to the interior of $\mathscr{D}_A$ whenever the poles fall out of (or on) the unit circle.
	Assumption \ref{ass:thm:B} requires monitoring the magnitude of polynomials $B(e^{-j\omega_h})$ at all compensation frequencies.
	The left inequality guarantees that $\hat{D}_B$ is always invertible. The right inequality requires some very rough knowledge about the plant magnitude because the term $\alpha/(1-\beta)$ is large according to \eqref{eq:alpha_beta_relation}. Both inequalities can be satisfied by projecting the estimates into the interior of $\mathscr{D}_B$ whenever they do not belong to $\mathscr{D}_B$.
\end{remark}

Assumption \ref{ass:thm:H} is 
in general difficult to verify since $B(\qi)$ polynomial is unknown. 
Based on theorem \ref{thm:main}, the 
equilibrium point of the closed loop system satisfies this assumption.
It can be shown that one of the factors	 that determines the domain of attraction associated with this equilibrium point is the excitation sequence $u(k)$ 
intensity. 
This implies that when no information about $B(\qi)$ is available and the estimated parameters are initialized with zeros, it may be required to use a large excitation signal such that the domain of attraction includes the initial values.
However, a nominal system is usually known in practice and this requirement can be relaxed. Moreover, when the system is slowly time-varying, it is expected that this assumption is satisfied with a significantly small excitation since the current estimates are kept inside the domain of attraction of the slowly varying equilibrium point.

\section{Empirical Study}\label{sec:dir_results}
This section provides the experimental verification of the proposed controller. 
The method is used to design a plug-in controller for tracking nano-scale unknown periodic trajectories with  high
frequency spectra in hard disk drives (HDDs).

A so-called \emph{single-stage} HDD uses a voice coil motor (VCM) for movements of the read/write heads \cite{horowitz2007dual}. 
The block diagram in Fig.~\ref{fig:SimpleFeedBackAndAdaptive_Abstract} can be adopted for this mechatronic system in track-following mode. 
The blocks $G$ and $C_F$ refer to the VCM and the nominal feedback controller respectively. 
The signals $w$, $r$, $n$ and $m$ denote the (unknown) airflow disturbance, \emph{repeatable runout} (RRO), \emph{non-repeatable runout} (NRRO) and {measurement noise} respectively. The measured \emph{position error signal} (PES) is denoted by $e$.
The design of $C_F$ is not discussed here, and it is assumed that this compensator can robustly stabilize the closed loop system and attenuate the broad band noises $w$, $n$ and $m$ (c.f. \cite{shahsavari2013robust,shahsavari2013h}).  
The plug--in controller that will be explained shortly targets the RRO ($r$) which consists of sinusoids.
An exact dynamics of the actuators is not known for each individual HDD. 
Furthermore, temperature variations and deterioration over time can introduce more uncertainties
\cite{malang2005method}.
We do not use any information about the system dynamics or the feedback controller. In other words, everything is unknown to us except the position error signal (PES).
Bit patterned media (BPM) is an emerging magnetic recording technology in which each data bit is recorded on a single  
magnetic island in a gigantic array patterned by lithography on the  disk.
%
This makes the servo
system a crucial component, 
and introduces significant new complexity. 
%
BPMR requires that the data tracks be followed with significantly more accuracy than what is required in conventional continuous media recording
since the head has to be accurately positioned over the  single--domain islands in order to read or write data. 
Unknown track shapes 
results in written--in runout which becomes repeatable (RRO) from the controller sight of view when the disk spins -- i.e. $r$ in Fig.~\ref{fig:SimpleFeedBackAndAdaptive_Abstract}. 
In our setup, RRO  has narrow--band contents at the HDD spinning frequency ($120Hz$) and its $173$ higher harmonics.
In other words, $n$ in \eqref{eq:r_definition} is $174$, 
\begin{align}\label{eq:omega_dual_stage}
\omega_i(rad/s)=i\times120(Hz)\times2\pi\qquad i=1,\cdots, 174,
\end{align}
and system sampling frequency is 41.760KHz ($120Hz\times 348$).

\subsection{Computer Simulation Results}
The magnitude response of the closed loop dynamics from the VCM input to the PES decays notably after $2KHz$, which makes the VCM at frequencies above $7KHz$  ineffective.
Accordingly, we only focus on tracking the first $58$ harmonics ($120Hz$, $240Hz$, ..., $6960Hz$) in the design of $C_A$ in Fig.~\ref{fig:SimpleFeedBackAndAdaptive_Abstract}.
The remaining 115 harmonics should be allocated to a higher bandwidth actuator which is beyond the scope of this paper and is considered as one of our future work.

The design parameters of the adaptive control algorithm are listed in Table~\ref{tab:parameters-simulation}.
\begin{table}[!t]
	\caption{Hyper Parameters of the adaptive control algorithm in the empirical study.}
	\label{tab:parameters-simulation}
	\centering
	\begin{tabular}{c c c }
		\hline\hline
		$n_A$ \eqref{eq:direct_theta_phi}	& $\alpha$ \eqref{eq:integral} 	&$\beta$ \eqref{eq:integral}		
	\\
	\hline
		$5$		&4\text{\sc{e}-}5		&1-(2\text{\sc{e}-}7)				
		\\
		\hline\hline
	\end{tabular}
\end{table}
%
%
%
The estimated coefficients for $A(q^{-1})$ and $B(q^{-1})$ that construct $\hat{\theta}_A$ and $\hat{\theta}_B$ are shown in Fig.~\ref{fig:AB-1-29}. The figure shows that the estimated parameters converge to ``some'' values quickly. In order to evaluate the convergence point, we generated the transfer function $\frac{\hat{B}(q^{-1})}{\hat{A}(q^{-1})}$ that corresponds to these values. The frequency response of this (5\textsuperscript{th} order) transfer function is compared to the actual transfer function of the VCM loop (a realistic 50\textsuperscript{th} order model) in Fig.~\ref{fig:bode-1-29}. The shaded strip indicates the compensation frequency interval where the adaptive controller was active. 

\begin{figure}[t]
	\centering
\includegraphics[width=0.8\linewidth]{./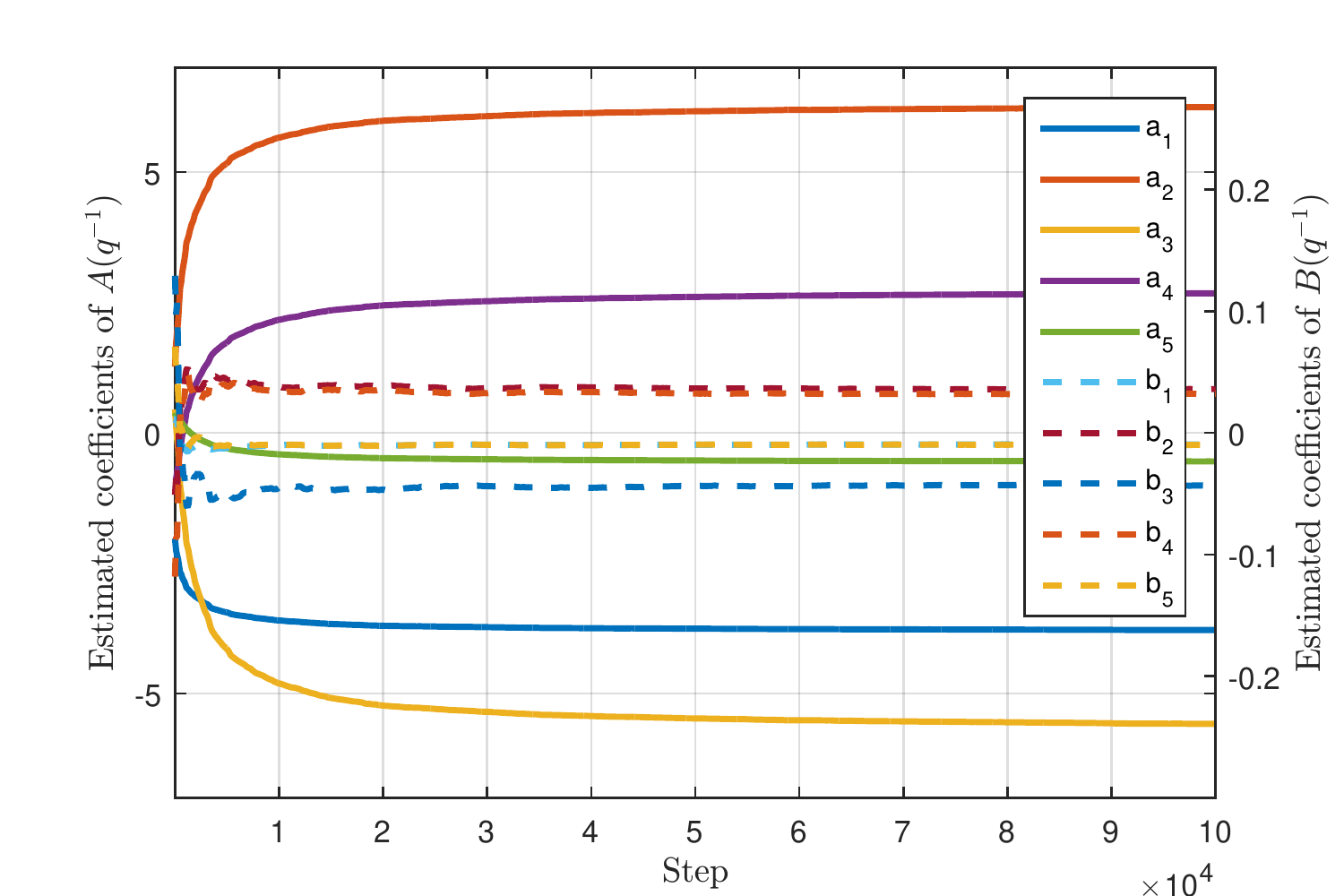}
	\caption{Evolution of $-\hat{\theta}_A$ and $\hat{\theta}_B$  elements. }
	\label{fig:AB-1-29}
\end{figure}
\begin{figure}[!t]
	\centering
	\includegraphics[width=0.9\linewidth]{./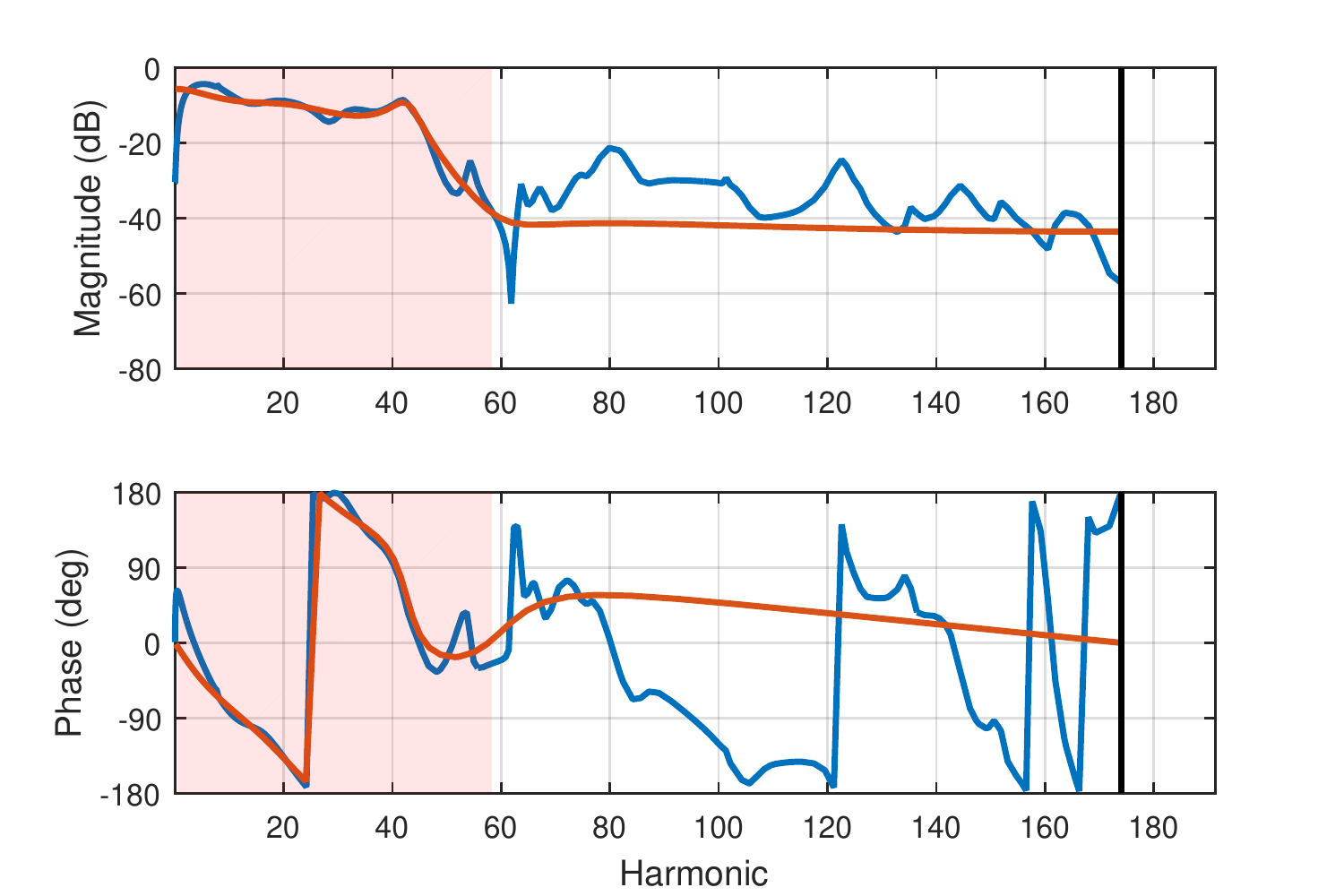}
	\caption{Frequency response comparison of the identified model and the actual VCM loop. The shaded strip indicates the compensation frequency range.}
	\label{fig:bode-1-29}
\end{figure}

The estimated residue parameters, $\hat{\theta}_M$, are depicted in Fig.~\ref{fig:thetaM-1-29}. 
The plot shows that the residual disturbance converges towards zero as the algorithm evolves. 
	\begin{figure}[!tb]
		\centering	\includegraphics[width=0.9\linewidth]{\figpath/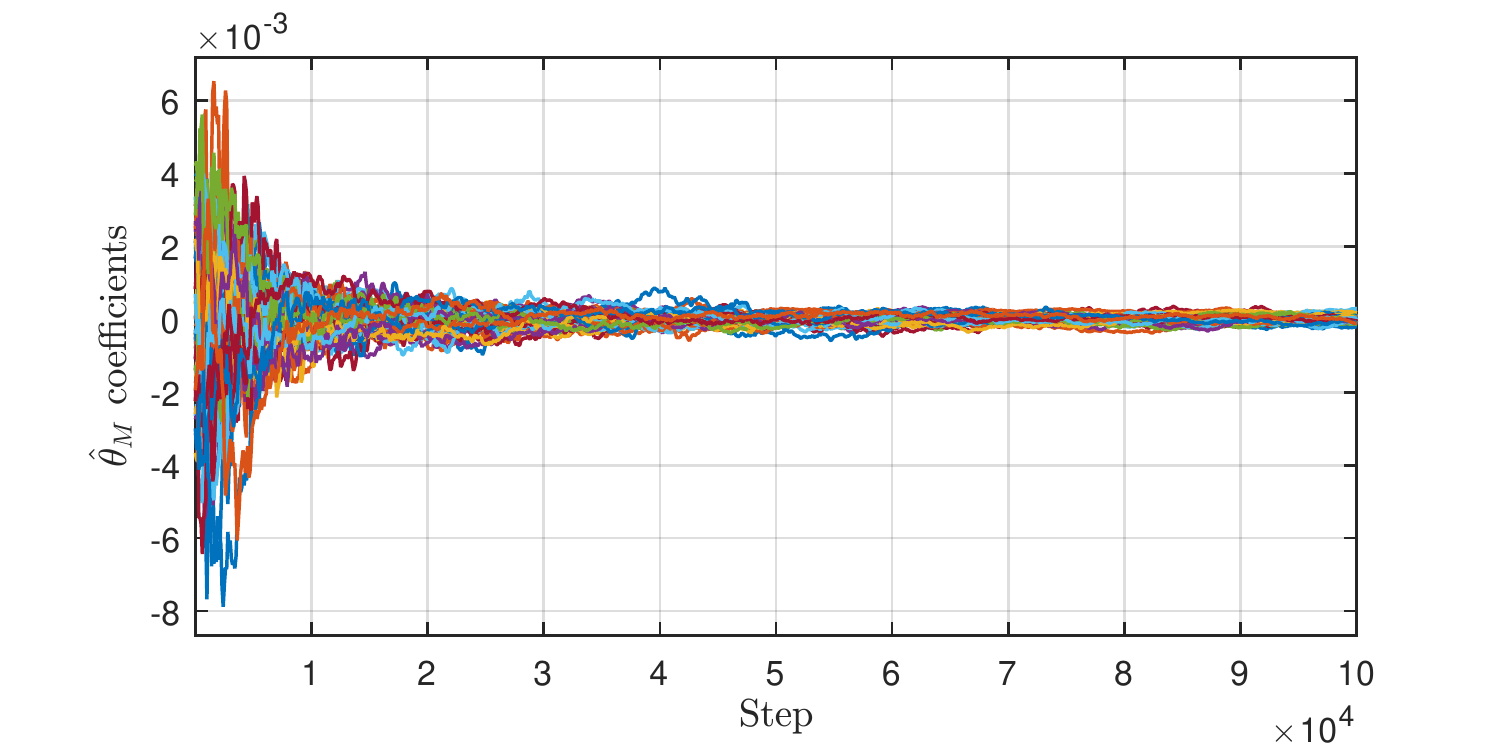}
		\caption{Estimated residue parameters, $\hat{\theta}_M$.}
		\label{fig:thetaM-1-29}
	\end{figure}
%
%
This can be verified in frequency domain based on the spectrum of error too.
The amplitude spectrum of the error before and after plugging the adaptive controller to the closed loop servo system are depicted in 
Fig.~\ref{fig:spectrum-bars-1-29}. 
For clearness, the figure only shows the amplitude of the error Fourier transformation at compensation frequencies. 
	\begin{figure}[!t]
		\centering	\includegraphics[width=0.9\linewidth]{./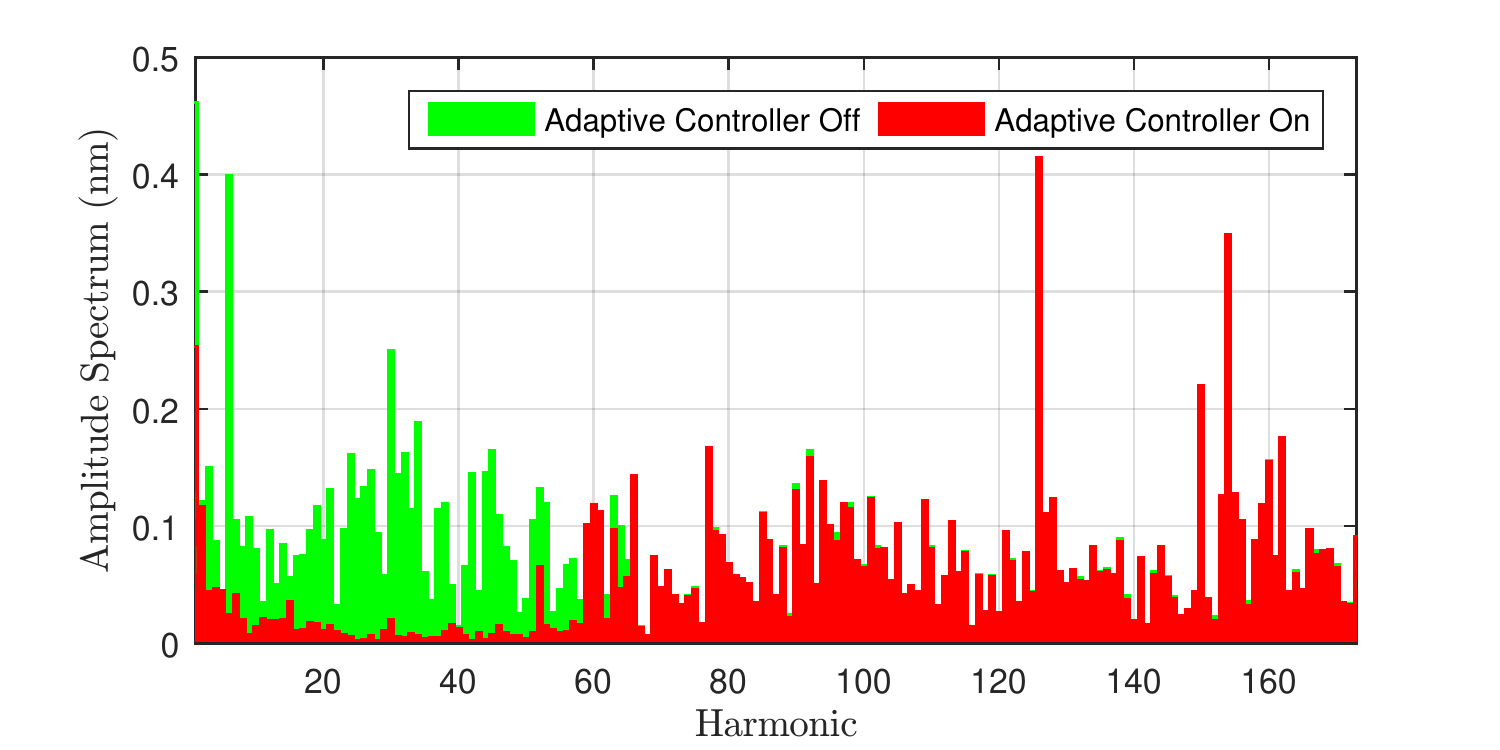}
		\caption{Comparison of the position error spectrum before and after plugging the adaptive controller. This figure shows the amplitude of Fourier transformation only at harmonics -- i.e. other frequencies are removed. }
		\label{fig:spectrum-bars-1-29}
	\end{figure}

As mentioned earlier, the control signal $u_A$ learned by the controller is periodic. Figure~\ref{fig:uffwd-1-29} depicts one period of this signal, which can be saved as one period of a repetitive feedforward control sequence that is able to perfectly compensate the first 58 harmonics.

	\begin{figure}[!t]
		\centering	\includegraphics[width=0.9\linewidth]{./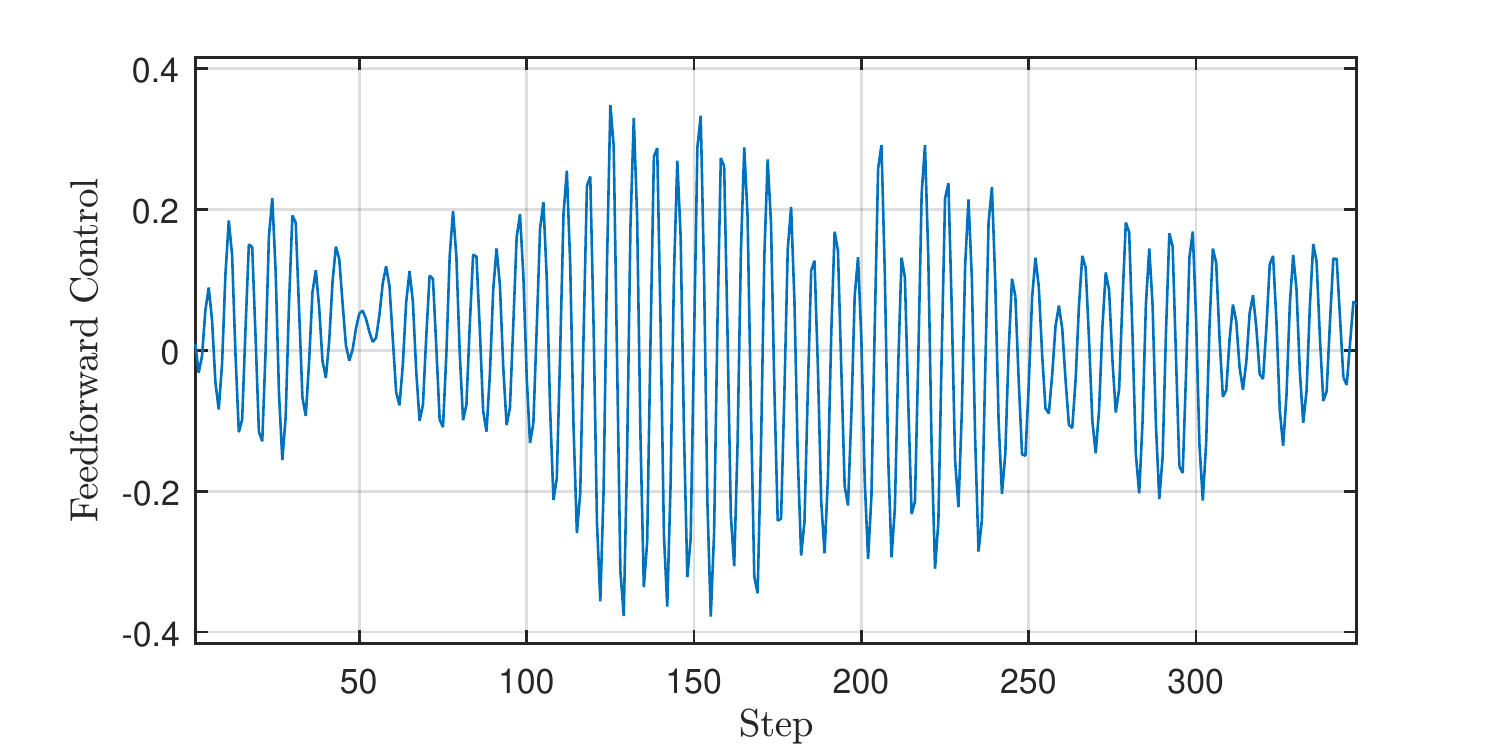}
		\caption{Feedforward control signal, $u_A(k)$, learned by the adaptive controller.}
		\label{fig:uffwd-1-29}
	\end{figure}

\section{Conclusion}\label{sec:conclusion}
A novel direct adaptive control method for the rejection of disturbances or tracking trajectories consisted of multiple sinusoidals with selective frequencies was proposed. The method is applicable to both minimum and non-minimum phase linear systems with unknown dynamics. 
The adapted parameters converge to the real values when a large enough excitation signal is injected to the system. In the presence of some rough knowledge about the system dynamics, the excitation signal can be reduced considerably. 
The analysis in this paper was performed for linear time-invariant systems. However, similar results can be extended to systems with slowly time--varying parameters. 

We verified the effectiveness of the proposed control algorithm in tracking unknown nano--scale  periodic trajectories in hard disk drives by designing an add--on repetitive controller that was able to track the first 58 harmonics of the disk spinning frequency. Full spectrum compensation was impossible in our running example due to the VCM limited bandwidth. This issue can be addressed by deploying a \emph{dual--stage} mechanism that has a high--bandwidth actuator in conjunction with the VCM. Extension of the proposed method to multi--input single--output systems and experimental verification of the algorithm will form our future work.


\bibliographystyle{ieeetr}        
\bibliography{../Thesis/thesis_phd/references,../Thesis/thesis_phd/references2}{}


\end{document}